\DeclareMathOperator*{\argmin}{arg\,min}
\newtheorem{assumption}{Assumption}
\newtheorem{theorem}{Theorem}
\newtheorem{proposition}{Proposition}
\newlength\myindent
\def\BibTeX{{\rm B\kern-.05em{\sc i\kern-.025em b}\kern-.08em
    T\kern-.1667em\lower.7ex\hbox{E}\kern-.125emX}}
\title{\LARGE \bf
Ergodic Control Strategy for Multi-Agent Environment Exploration
}
\author{Rabiul Hasan Kabir$^{1}$, Kooktae Lee$^{1}$, and Geronimo Macias$^{1}$
\thanks{$^{1}$R. H. Kabir, K. Lee, and G. Macias are with the Department of Mechanical Engineering, New Mexico Institute of Mining and Technology, Socorro, NM 87801, USA,
        {\tt\scriptsize rabiul.kabir@student.nmt.edu, kooktae.lee@nmt.edu, geronimo.macias@student.nmt.edu}}%
}
\begin{document}

\maketitle
\thispagestyle{empty}
\pagestyle{empty}

\begin{abstract}
In this study, an ergodic environment exploration problem is introduced for a centralized multi-agent system. Given the reference distribution represented by the Mixture of Gaussian (MoG), the ergodicity is achieved when the time-averaged robot distribution is identical to the given reference distribution. The major challenge associated with this problem is to determine a proper timing for a team of agents (robots) to visit each Gaussian component in the reference MoG for ergodicity. The ergodic function is defined as a measure of ergodicity and the condition for convergence is derived based on timing analysis. The proposed control strategy provides relatively reasonable performance to achieve the ergodicity. We provide the formal algorithm for centralized multi-agent control to achieve the ergodicity and simulation results are presented for the validation of the proposed algorithm. 

\end{abstract}

\section{INTRODUCTION}

In recent years, the concept of ergodic environment exploration for autonomous robots has gained a lot of interests due to the efficiency of the exploration scheme as well as its applicability to various sectors such as search and rescue, disaster response, surveillance and reconnaissance, wildlife and weather monitoring, space exploration, etc. The efficiency in this context indicates that the agents are capable of covering a spatial domain with some priority or degrees of importance associated with the given environment. In this case, the agents need to survey an environment such that the time-averaged robot distribution is identical to the given reference distribution.

The first approach to achieve the ergodic exploration for an autonomous multi-agent system is introduced in \cite{mathew2011metrics}. This study provides a new measure for ergodicity based on Fourier Basis Function indicating the difference between the time-averaged robot distribution and reference spatial distribution. This ergodic metric is then utilized to develop a feedback control law for first and second-order robot dynamics with centralized multi-agent systems.  
In \cite{silverman2013optimal}, a strategy to obtain an optimal trajectory for a team of autonomous robots is proposed for data acquisition task. This study focuses on designing an automated trajectory using an optimal control that makes the agents spend time in a region where the duration of time is proportional to the probability of getting informative data. This study also employs the Fourier Basis Function-based ergodic metric to determine the ergodicity of the robot distribution.
A similar work is done by \cite{miller2013trajectory} where the objective is to develop an algorithm to generate trajectories for efficient explorations based on a probabilistic information density map of the given region. Another contribution of this study is the consideration of general non-linear robot dynamics.
In \cite{ayvali2017ergodic}, an ergodic exploration scheme is presented for multiple agents deployed to survey a region with obstacles and restrictive areas, where coordination among multiple agents with different sensory capability is utilized to demonstrate ergodic exploration of the given domain. A finite receding horizon optimal control-based algorithm, termed Ergodic Environmental Exploration (E3) is proposed in \cite{o2015optimal}, to survey an unknown environment consisting of regions with varying degrees of importance. This algorithm determines a required minimum control effort and minimum difference between distributions for time-averaged system trajectory and the information gain.
In \cite{prabhakar2015symplectic}, an iterative optimal control algorithm for general nonlinear dynamics is studied. The authors have demonstrated two separate approaches for discrete-time iterative optimization  -- first order discretization and symplectic integration. It is presented that the control and state trajectories are significantly influenced by the discretization choice for a system. 
A receding horizon control approach to achieve ergodic coverage is proposed in \cite{mavrommati2017real}. This study presents that the algorithm improves the ergodicity between an information density distribution of the spatial domain and time-averaged trajectory of agents. This algorithm enables the agents to explore  the domain independently and to share coverage information with other agents across a communication network. 
A trajectory optimization approach is presented in \cite{de2016ergodic} for ergodic area explorations with the consideration of stochastic nonlinear sensor dynamics. The results suggest that the proposed algorithm can generate trajectories with greater and more predictable ergodicity. 
A decentralized multi-agent ergodic control algorithm with nonlinear dynamics is developed in \cite{abraham2018decentralized} . This algorithm requires the agents to share only a coefficient associated with the action of one agent with others to realize decentralized exploration. 

It is noteworthy that all of the aforementioned research works on the ergodic environment exploration are developed based on the Fourier Basis Function introduced in \cite{mathew2011metrics}. The downside of this metric is that for practical implementation, it inevitably involves an approximation due to the inclusion of infinite summation terms. 
A similar concept related to obtain ergodicity is discussed in \cite{milutinovi2006modeling,hamann2008framework,qi2014multi,ivic2016ergodicity,eren2017velocity}, where ergodicity is achieved based on the global behaviors of multi-agent system from the macrostate of the partial differential equation. However, the appropriate behavior can only be attained if there are extremely large numbers of agents deployed in the domain. Another ergodic exploration scheme inspired by the optimal transport theory is presented in \cite{kabir2020ACC}, which also involves an approximation due to the sample representation of the reference spatial distribution. 

In our previous work \cite{kabir2020ergodicity}, an ergodic environment exploration plan is proposed based on the timing analysis. This plan is, however, only applicable to a single-agent system. In this study, a new strategy to realize a multi-agent ergodic exploration is developed. A Mixture of Gaussian is considered as a reference spatial distribution and the agents are assumed to generate a mass in the form of a skinny Gaussian. The main problem is to find proper timing for the team of agents to survey and exit a Gaussian component. The ergodic function, a measure of ergodicity, is defined and a condition that guarantees the convergence of the ergodic function is derived, which is one of the major contributions of this paper. Further, a control law for the agent position update is provided and the formal algorithm for realizing multi-agent ergodic exploration is presented. To verify the validity of the proposed method, simulations are carried out and simulation results are provided. 

\section{PROBLEM DESCRIPTION}
\textit{Notation:} A set of real and natural numbers are denoted by $\mathbb{R}$ and $\mathbb{N}$, respectively. Further, $\mathbb{N}_0 = \mathbb{N}\cup\{0\}$. The symbols $\Vert \cdot \Vert$ and $^{T}$, respectively, denote the Euclidean norm and the transpose operator. The variable $k\in\mathbb{N}_0$ is used to denote a discrete time.

This section introduces the ergodic environment exploration problem for multi-agent systems. 
The spatial distribution, $\rho^*$, is given as the reference distribution, which is assumed to be in the form of an MoG as follows:
\begin{assumption}
The given spatial distribution $\rho^*$ is expressed as an MoG in the following form:
\begin{align}
    \rho^* = \sum_{i=1}^{m}\alpha_i\mathcal{N}(\mu_i, \Sigma_i),
\end{align}
where $\alpha_i$ is a weight such that $0< \alpha_i < 1$, $\forall i$ with $\sum_{i=1}^{m}\alpha_i=1$, $\mathcal{N}(\mu_i, \Sigma_i)$ is a Gaussian distribution with mean $\mu_i$ and covariance $\Sigma_i$ and $m$ is the total number of Gaussian distributions in the given MoG.
\end{assumption}
Throughout this paper, it is also assumed that $\rho^*$ is stationary (i.e., it does not change over time).

Given $N\in\mathbb{N}$ numbers of agents to explore the domain, each agent is assumed to generate a unit mass concentrated on the current location with a skinny Gaussian distribution as illustrated in Fig. \ref{fig: problem_schematic}.
\begin{figure}[!h]
    \centering
    \includegraphics[scale=0.5]{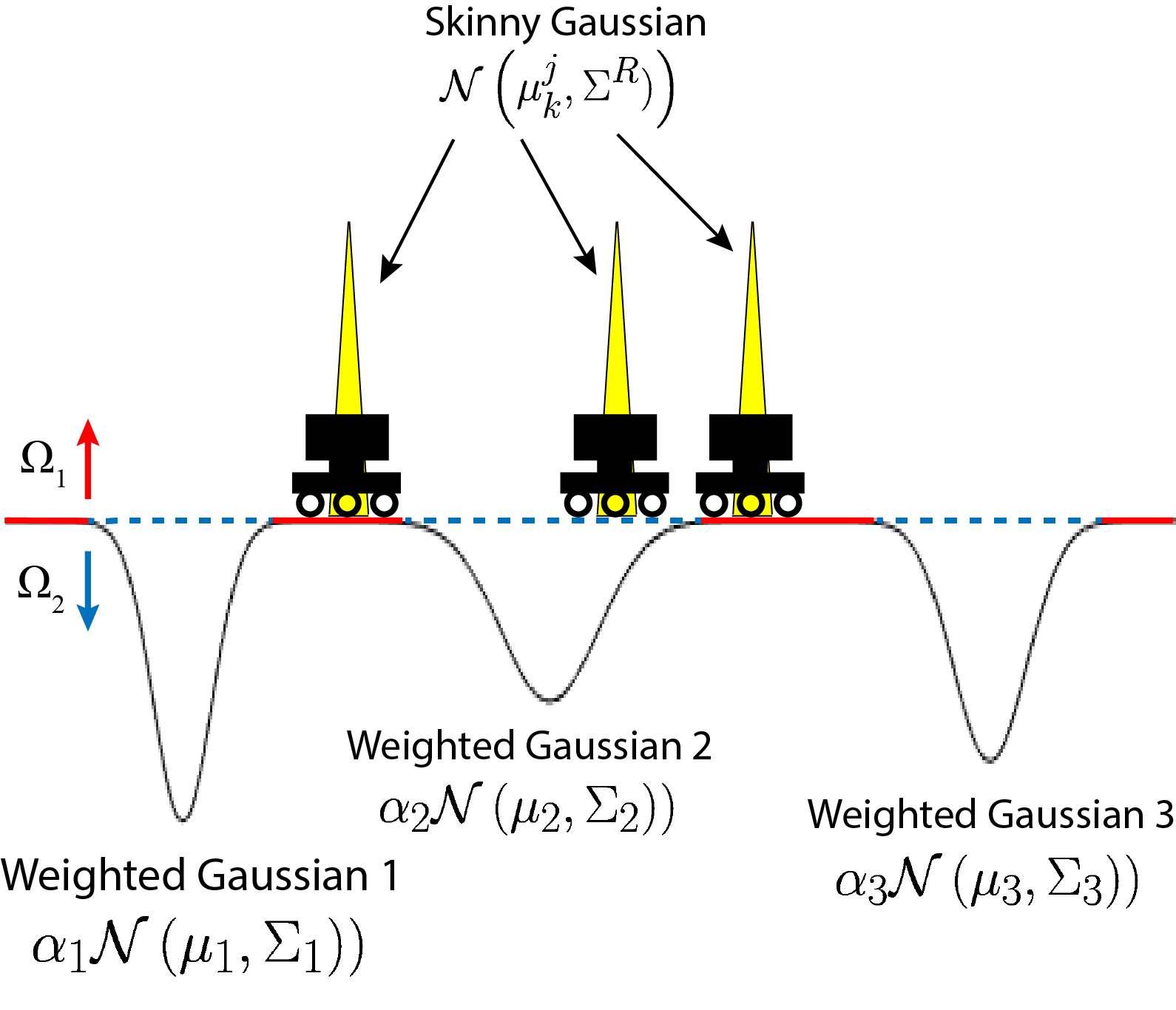}
    \caption{Schematic of the mass generation by the multi-agent system: 1D case}
    \label{fig: problem_schematic}
\end{figure}

In the continuous time, the agent $j$ located at $\mu^j\in\mathbb{R}^2 $ keeps generating a unit mass with a skinny Gaussian distribution, which is generalized as follows.
\begin{assumption}
Suppose that the position of the agent $j$ at any time is given as $\mu^j$. Then, the mass generated by the agent is represented by a skinny Gaussian $f^j:=\mathcal{N}(\mu^j,\Sigma^R)$, where $\Sigma^R$ is stationary, given such that its distribution is narrow and identical for all agents.
\end{assumption}

Mathematically, $f^j$ for the two dimensional case has the following structure:
\begin{equation}\label{eqn: f_{k}}
    \begin{aligned}
        f^j &= \frac{1}{\sqrt{(2\pi)^2\vert\Sigma^R\vert}}\exp\left(-\frac{1}{2}(x-\mu^j)^T(\Sigma^R)^{-1}(x-\mu^j)\right)\\ 
        &\qquad\qquad\qquad\qquad\qquad\qquad j=1,2,\ldots,N
\end{aligned}
\end{equation}
where $\vert\cdot\vert$ is the determinant and $x$ represents any point in the domain. 

The time-averaged distribution formed by the multi-agent system is then defined by the following form:
\begin{equation}\label{eqn: rho_cont}
\rho(x,t) = \frac{1}{Nt} \left(\sum_{j=1}^{N}\int f^j(x,t)dt\right),
\end{equation}
where $f^j(x,t)$ denotes a skinny Gaussian for agent $j$ in the continuous time case.

Notice that in the above form, the integral is taken with respect to time with a summation over all agents, followed by a division for the number of agents as well as the total elapsed time. Thus, it is not only the time- but also agent-averaged behavior.

The counterpart for the discrete-time case is then written by
\begin{equation}\label{eqn: rho_dis}
    \rho_{k} := \rho(x,k) =  \frac{1}{k+1}\left (\frac{1}{N}\sum_{j=1}^{N}\sum_{i=0}^{k} f_i^j\right),
\end{equation}
where $f_i^j$ stands for a skinny Gaussian in the discrete-time step for  agent $j$.
The above discrete-time equation will be used throughout the paper to derive the timing condition for ergodicity.

The difference between the time-averaged distribution $\rho_k$ and the given spatial distribution $\rho^*$ at any time $k$ is written as
\begin{align}\label{eqn: phi_{k}}
    \phi_k=\rho_k-\rho^*
\end{align}
Furthermore, the ergodic function $V_k$ is defined as the integral of the absolute value of $\phi_k$ over the given domain $\Omega$ by
\begin{align}
    V_k = \int_{\Omega}|\phi_k|dx\label{eqn: V_k}
\end{align}
The value for $V_k$ always lies between $0$ and $2$ due to the given definition of $V_k$. 


Fig. \ref{fig: problem_schematic} illustrates multiple agents with mass generation in the shape of skinny Gaussian as well as the given spatial distribution as an MoG with a negative sign as in \eqref{eqn: phi_{k}}. 
The spatial domain $\Omega$ is divided into two regions: the region consisting of the holes, denoted by $\Omega_2$ (blue dashed lines in Fig. \ref{fig: problem_schematic}) and the remaining area of $\Omega$ outside $\Omega_2$, denoted by $\Omega_1$ (red solid lines in Fig. \ref{fig: problem_schematic}). Mathematically, $\Omega_1$ and $\Omega_2$ are defined as
\begin{align}
    \Omega_1 &:= \Omega - \Omega_2, \quad
    \Omega_2 := \{x\vert x\in \bigcup_{i=1}^{m} X(\mathcal{N}\left(\mu_i, \Sigma_i)\right) \},
\end{align}
where $X(\mathcal{N}\left(\mu_i, \Sigma_i)\right)$ denotes the domain belongs to each component-wise Gaussian $\mathcal{N}\left(\mu_i, \Sigma_i\right)$.

The main objective of this paper is to achieve the ergodicity such that the time-averaged robot distribution $\rho_k$ converges to the given spatial distribution $\rho^*$ as time approaches infinity ($V_k\rightarrow 0$ as $k\rightarrow \infty$). In other words, a proper control strategy for the multi-agent system is required for the ergodicity, which is equivalent to determine the robot position $\mu_k^j$, $j=1,2,\ldots,N$, at each discrete time $k$.

One may try to achieve this goal by making the agents stay at each hole (or component-wise weighted Gaussian in a given MoG as shown in Fig. \ref{fig: problem_schematic}) with a given portion $\alpha_i$. However, the approach is too simplistic and will not work for this particular problem for the reasons that follow. Recalling the time-averaged dynamics in \eqref{eqn: rho_dis}, it can be written recursively by
\begin{align}
    \rho_{k} = \dfrac{1}{k+1}\left(k\cdot\rho_{k-1} + \frac{1}{N}\sum_{j = 1}^{N}f_{k}^j\right)\label{eqn: f_k recursive}
\end{align}
It is evident from \eqref{eqn: f_k recursive} that the influence of the current mass generation $f_k$ on the time-averaged distribution $\rho_k$ reduces nonlinearly by $\dfrac{1}{k+1}$, resulting in the difficulty to attain the ergodicity. 
Moreover, even if one hole is completely filled with mass generated by the agents, the mass vanishes gradually as soon as the agents depart from that hole. Lastly, the agents are unable to jump from one hole to another and hence, they generate unnecessary masses while traveling through the $\Omega_1$ region. These issues induce uncertainty about timing for the agents to visit each hole and the duration they should stay there. 

In the following section, we thus provide the analysis to guarantee that $V_k$ is decreasing under a certain condition.


\section{ERROR ANALYSIS OF ERGODIC OPERATION}
The proposed multi-agent exploration scheme is developed in a way that all agents act as a team to explore a hole   together and then, proceed to another hole once the current hole is filled to a certain amount. The variable $h$ is given to denote the averaged time for the agents being inside $\Omega_1$. Similarly, $h'$ indicates the averaged time spent by the agents in a hole to explore that hole. Alternatively, $h$ and $h'$ can be written by
\begin{align}
    h = \frac{1}{N}\sum_{j=1}^N h^j, \qquad h'= \frac{1}{N}\sum_{j=1}^N {h^j}'
\end{align}
where $h^j$ and ${h^j}'$ denote the time spent by agent $j$ in $\Omega_1$ and a hole, respectively. For a given spatial distribution with an MoG form having multiple holes, a subscript will be used to indicate a specific hole. Before proceeding to the error analysis, the following proposition sheds light on how the time-averaged distribution changes as the team of agents move in the domain.

\begin{proposition}\label{proposition: delta_rho_k}
Given the time-averaged distribution $\rho_k$ at any time $k$, the variation in the time-averaged distribution after $h$ time steps, $\Delta \rho_k^h$, can be calculated by
\begin{align*}
       \Delta \rho_k^h&:= \rho_{k+h}-\rho_k=\frac{1}{k+h+1}\left(\frac{1}{N}\sum_{j=1}^{N}\sum_{i=k+1}^{k+h}f_i^j-h\rho_k \right )
\end{align*}
\end{proposition}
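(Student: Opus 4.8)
The plan is to establish this identity purely by algebraic manipulation of the defining expression \eqref{eqn: rho_dis} for the discrete-time-averaged distribution; there is no genuine analytic content here, only careful bookkeeping of the two normalizing factors $\frac{1}{k+1}$ and $\frac{1}{k+h+1}$.

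First I would write $\rho_{k+h}$ explicitly from \eqref{eqn: rho_dis}, replacing $k$ by $k+h$, so that the inner time summation runs from $i=0$ to $i=k+h$. The key step is then to split this summation at the index $k$, that is, $\sum_{i=0}^{k+h} = \sum_{i=0}^{k} + \sum_{i=k+1}^{k+h}$. This separates the mass already accumulated up to time $k$ from the new mass generated during the $h$ additional steps.

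Next I would recognize the first block: by the definition of $\rho_k$ we have $\frac{1}{N}\sum_{j=1}^{N}\sum_{i=0}^{k} f_i^j = (k+1)\rho_k$, obtained by multiplying \eqref{eqn: rho_dis} through by $(k+1)$. Substituting this back gives $\rho_{k+h} = \frac{1}{k+h+1}\left((k+1)\rho_k + \frac{1}{N}\sum_{j=1}^{N}\sum_{i=k+1}^{k+h} f_i^j\right)$, which already isolates the contribution of the new masses.

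Finally, I would form the difference $\Delta\rho_k^h = \rho_{k+h} - \rho_k$, place both terms over the common denominator $k+h+1$, and combine the $\rho_k$ terms, whose coefficient collapses to $(k+1) - (k+h+1) = -h$. This produces exactly the claimed expression. I do not expect any real obstacle; the only point requiring a little care is tracking the index shift in the summation and the cancellation of the constant in the $\rho_k$ coefficient. Once the substitution $(k+1)\rho_k$ is made, the remaining simplification is immediate.
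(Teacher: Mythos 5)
Your proposal is correct and follows essentially the same route as the paper: both arrive at the intermediate identity $\rho_{k+h} = \frac{1}{k+h+1}\bigl((k+1)\rho_{k} + \frac{1}{N}\sum_{j=1}^{N}\sum_{i=k+1}^{k+h}f_{i}^j\bigr)$ (the paper via its recursive form \eqref{eqn: f_k recursive}, you by splitting the sum in \eqref{eqn: rho_dis} directly) and then subtract $\rho_k$ over the common denominator so the coefficient collapses to $-h$. No gaps.
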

\begin{proof}
From \eqref{eqn: f_k recursive}, the time-averaged distribution at time $k+h$ can be written as
\begin{align}\label{eqn: rho_k+h}
    \rho_{k+h} = \dfrac{1}{k+h+1}\left((k+1)\rho_{k} + \frac{1}{N}\sum_{j = 1}^{N}\sum_{i = k+1}^{k+h}f_{i}^j\right)
\end{align}


Then, the following equation can be obtained for $\Delta \rho_k^h$ using \eqref{eqn: rho_k+h}.
\begin{align*}
    \Delta\rho_k^h &=\frac{1}{k+h+1}\left(\frac{1}{N}\sum_{j=1}^{N}\sum_{i=k+1}^{k+h} f_i^j - h\rho_k \right)
\end{align*}
Similarly, $\Delta\rho_{k+h}^{h'}$ is obtained by
\begin{align*}
    \Delta\rho_{k+h}^{h'} =\frac{1}{k+h+h'+1}\left(\frac{1}{N}\sum_{j=1}^{N}\sum_{i=k+h+1}^{k+h+h'} f_i^j - h'\rho_{k+h} \right)
\end{align*}


\end{proof}

\begin{figure}
\begin{center}
\includegraphics[scale=0.45]{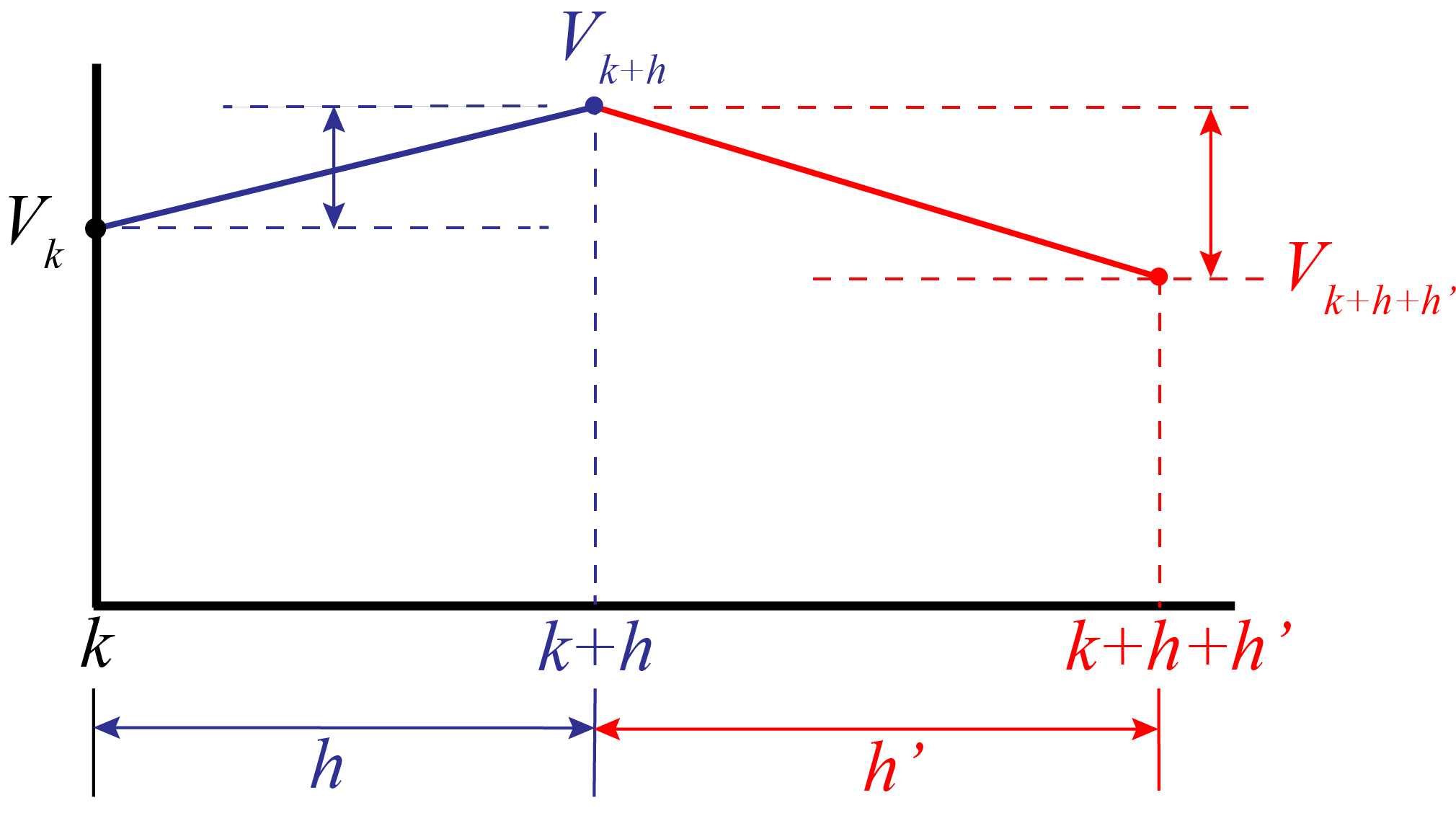}
\caption{Piece-wise variation of ergodic function $V_k$ with discrete time}\label{fig: V_plot}
\end{center}
\end{figure}

The agents need to spend time in $\Omega_1$ while traveling to a hole from the previous hole and therefore, $V_k$ goes up as the agents are spending time where they should not be. On the contrary, exploring a hole to match the time-averaged distribution with the given spatial distribution for a hole results in drop of $V_k$. Based on these observations, the proposed method to attain the ergodicity can be explained in the following way. 

While the team of agents travel from one hole to another hole with an average of $h$ time steps in $\Omega_1$, $V_k$ increases since the mass is generated outside holes. On the other hand, $V_k$ decreases while agents stay in a hole. This is illustrated in Fig. \ref{fig: V_plot}. 
Given the period for the time step $h+h'$, the piece-wise decreasing property of $V_k$ is guaranteed if the decrement of ergodic function from $V_{k+h}$ to $V_{k+h+h'}$ is greater than the increment from $V_{k}$ to $V_{k+h}$. Satisfying this condition throughout the multi-agent exploration can ensure that the ergodic function $V_k$ will converge to zero, which is defined as piece-wise convergence. To this end, the following theorem is developed for the piece-wise convergence of the ergodic function.

\begin{theorem}\label{theorem: 1}
Consider the multi-agent ergodicity problem to realize $V_k\rightarrow 0$ as $k\rightarrow \infty$. 
Given $h$ time steps for the team of agents in $\Omega_1$ to reach a certain hole, the ergodic function $V_k$ is a piece-wise contraction mapping, if the agents stay at the hole for $h'$ time steps given by
\begin{align}
     h' > \left( \frac{\int_{\Omega_2}\rho_k dx}{\int_{\Omega_1}\rho_k dx} \right )h\label{eqn: theorem_1}
\end{align}
\end{theorem}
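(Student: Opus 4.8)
The plan is to track the change in the ergodic function $V_k$ over one full cycle of duration $h+h'$, splitting that cycle into the travel phase (the $h$ steps the team spends in $\Omega_1$) and the exploration phase (the $h'$ steps spent inside the target hole in $\Omega_2$), and then to show that the condition \eqref{eqn: theorem_1} forces the net change to be negative so that $V_{k+h+h'}<V_k$. First I would use Proposition \ref{proposition: delta_rho_k} to express $\Delta\rho_k^h$ and $\Delta\rho_{k+h}^{h'}$ explicitly, and feed these into $V_{k+h}-V_k=\int_\Omega(|\phi_{k+h}|-|\phi_k|)\,dx$ and the analogous difference for the second leg. The key modeling observation is that during the travel phase all generated mass $f_i^j$ lands in $\Omega_1$, whereas during the exploration phase it lands in $\Omega_2$; combined with the sign of $\phi_k=\rho_k-\rho^*$ on each region (mass is in deficit inside the holes and in surplus outside), this lets me drop the absolute values and replace $|\phi_{k+h}|-|\phi_k|$ by a signed integral of $\Delta\rho$ over the appropriate region.

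Concretely, I would argue that the increment over the travel leg is controlled by $\int_{\Omega_1}\Delta\rho_k^h\,dx$ and the decrement over the exploration leg by $\int_{\Omega_2}\Delta\rho_{k+h}^{h'}\,dx$. Using Proposition \ref{proposition: delta_rho_k}, the mass-generation term $\tfrac{1}{N}\sum_j\sum_i f_i^j$ integrates (to leading order) to $h$ on $\Omega_1$ in the first phase and to $h'$ on $\Omega_2$ in the second, since each $f_i^j$ is a unit-mass skinny Gaussian concentrated in the region currently being visited. This reduces the two phase-changes to expressions of the form $\tfrac{1}{k+h+1}\bigl(h-h\int_{\Omega_1}\rho_k\,dx\bigr)$ for the rise and $\tfrac{1}{k+h+h'+1}\bigl(h'-h'\int_{\Omega_2}\rho_{k+h}\,dx\bigr)$-type terms for the fall, which I would bound by $\int_{\Omega_2}\rho_k\,dx$ and $\int_{\Omega_1}\rho_k\,dx$. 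Requiring decrement $>$ increment then rearranges directly into the stated inequality $h'>\bigl(\int_{\Omega_2}\rho_k\,dx\big/\int_{\Omega_1}\rho_k\,dx\bigr)h$, at which point the per-cycle contraction $V_{k+h+h'}<V_k$ establishes the piece-wise contraction-mapping claim.

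The main obstacle will be making the absolute-value bookkeeping rigorous: $V_k$ is the integral of $|\phi_k|$, and dropping the modulus requires knowing the sign of $\phi_k$ is constant on each of $\Omega_1$ and $\Omega_2$ throughout a cycle, which is only approximately true (a hole that is nearly filled can flip sign, and the skinny Gaussians leak a little mass across the $\Omega_1/\Omega_2$ boundary). I would handle this by treating the skinny-Gaussian approximation as exact — i.e. assuming each $f_i^j$ is supported in the region the team currently occupies and that $\rho_k<\rho^*$ on $\Omega_2$ while $\rho_k>\rho^*$ on $\Omega_1$ during the relevant window — and flag these as the working assumptions under which the clean inequality holds. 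A secondary technical point is the $\tfrac{1}{k+h+1}$ versus $\tfrac{1}{k+h+h'+1}$ normalization mismatch between the two legs; I would address this by comparing the two signed integrals on a common denominator and checking that the $h\rho_k$ versus $h'\rho_{k+h}$ terms still yield the claimed ratio, using $\rho_{k+h}\approx\rho_k$ for the normalization factor since the corrections are higher order in $1/k$.
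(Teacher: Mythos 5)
Your proposal follows essentially the same route as the paper: split the cycle into the $\Omega_1$ travel leg and the in-hole leg, drop the absolute values using the (assumed) sign of $\phi_k$ on each region, integrate the unit-mass skinny Gaussians to $h$ (resp.\ $h'$) on the region currently occupied, and rearrange the decrement-exceeds-increment inequality into \eqref{eqn: theorem_1}. The one place you diverge is the final bookkeeping: the paper does not approximate $\rho_{k+h}\approx\rho_k$ but instead uses the exact dilution relation $\int_{\Omega_2}\rho_{k+h}\,dx=\tfrac{k+1}{k+h+1}\int_{\Omega_2}\rho_k\,dx$ (no mass enters $\Omega_2$ during travel), and it is precisely this exact factor that makes the mismatched denominators $k+h+1$ and $k+h+h'+1$ cancel so that the clean ratio $h'>\tfrac{a}{1-a}h$ emerges with no residual higher-order terms.
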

In this case, we have $    |V_{k+h+h'}-V_{k+h}|>|V_{k+h}-V_k|
$.

\begin{proof}
Fig. \ref{fig: V_plot} illustrates that for $h$ time steps, the ergodic function $V_k$ increases such that $V_{k+h}>V_k$ and hence, $|V_{k+h}-V_k|=V_{k+h}-V_k$. Then, $|V_{k+h}-V_k|$ is obtained by
\begin{align}\label{eqn: V for h (1)}
\nonumber
|V_{k+h}-V_k|&=\int_{\Omega_1}(|\phi_{k+h}|-|\phi_k|)dx \\ &\qquad\qquad\qquad+\int_{\Omega_2}(|\phi_{k+h}|-|\phi_k|)dx 
\end{align}

For an ideal case, $\phi_k$ is positive (negative) in $\Omega_1$ ($\Omega_2$). From this observation, one can proceed with \eqref{eqn: V for h (1)} by replacing $\phi_k$ in \eqref{eqn: V for h (1)} with \eqref{eqn: phi_{k}} as
\begin{align}\label{eqn: V for h(2)}
\nonumber
   |V_{k+h}-V_k|&=\int_{\Omega_1}(\rho_{k+h}-\rho_k)dx-\int_{\Omega_2}(\rho_{k+h}-\rho_k)dx\\
   & = \int_{\Omega_1}\Delta\rho_{k}^hdx-\int_{\Omega_2}\Delta\rho_{k}^hdx
\end{align}

From Proposition \ref{proposition: delta_rho_k}, the above equation can be rewritten as
\begin{align}\label{eqn: V for h (3)}
\nonumber
  |V_{k+h}-V_k|&=\frac{1}{k+h+1}\left[ \int_{\Omega_1} \left (\frac{1}{N}\sum_{j=1}^{N}\sum_{i=k+1}^{k+h}f_i^j \right. \right. \\
    &\qquad\qquad\qquad \left. \left. -h \rho_k \right )dx + \int_{\Omega_2} h\rho_k dx  \right] 
\end{align}

In \eqref{eqn: V for h (3)}, some terms in the right hand side can be simplified as follows:
\begin{align}\label{eqn: h_omega1}
\nonumber
    &-\int_{\Omega_1}h\rho_k dx + \int_{\Omega_2}h\rho_k dx \\ &=-2h\int_{\Omega_1}\rho_k dx +  h\int_{\Omega_1+\Omega_2}\rho_k dx\\\nonumber
    &= -2h\int_{\Omega_1}\rho_k dx + h
\end{align}

\begin{align}\label{eqn: f_i omega1}
\nonumber
    \int_{\Omega_1} \frac{1}{N}\sum_{j=1}^{N}\sum_{i=k+1}^{k+h}f_i^j dx &= \frac{1}{N}\sum_{j=1}^{N}\sum_{i=k+1}^{k+h}\int_{\Omega_1}f_i^j dx \\
    & = \frac{1}{N}\sum_{j=1}^{N}\sum_{i=k+1}^{k+h}1 = h
\end{align}

Plugging \eqref{eqn: h_omega1} and \eqref{eqn: f_i omega1} into \eqref{eqn: V for h (3)}, it becomes
\begin{align}\label{eqn: V for h (4)}
   |V_{k+h}-V_k|=\frac{2h\int_{\Omega_2}\rho_k dx}{k+h+1} = \frac{2ha}{k+h+1}
\end{align}
where 
\begin{align}
\nonumber
   a:=\int_{\Omega_2}\rho_k dx = 1-\int_{\Omega_1}\rho_k dx
\end{align}

In the next step, the condition for $h'$ such that $V_{k+h+h'}<V_{k+h}$ is derived. For $h'$ time steps, the decrement of the ergodic function $V_k$ can be observed from Fig. \ref{fig: V_plot} and hence, $|V_{k+h+h'}-V_{k+h}|= -(V_{k+h+h'}-V_{k+h})$. By following the same procedure to obtain \eqref{eqn: V for h (4)}, the expression for $|V_{k+h+h'}-V_{k+h}|$ can be derived and written as
\begin{align}\label{eqn: V for h' (4)}
    |V_{k+h+h'}-V_{k+h}|=\frac{2h'\left( 1-\int_{\Omega_2}\rho_{k+h}dx\right )}{k+h+h'+1}
\end{align}
where
\begin{align}\label{eqn: 1-int rho_k_h}
    1-\int_{\Omega_2}\rho_{k+h}dx = \int_{\Omega_1}\rho_{k+h}dx
\end{align}

Note that, while the agents are travelling through $\Omega_1$, the agents do not generate any mass in $\Omega_2$ (no $f_i^j$ term). Thus, from \eqref{eqn: rho_k+h}, we have
\begin{align}\label{eqn: int rho_k+h}
    \int_{\Omega_2}\rho_{k+h} dx = \dfrac{k+1}{k+h+1}\cdot\int_{\Omega_2}\rho_{k} dx = \dfrac{(k+1)a}{k+h+1}
\end{align}
resulting in
\begin{align}\label{eqn: V for h' (5)}
    |V_{k+h+h'}-V_{k+h}| = \frac{2h'(1-\frac{(k+1)a}{k+h+1})}{k+h+h'+1}
\end{align}

To guarantee the piece-wise convergence of the ergodic function, the following condition must be satisfied:
\begin{equation}\label{eqn: V ineq}
    |V_{k+h+h'}-V_{k+h}|>|V_{k+h}-V_k|
\end{equation}

The condition guaranteeing \eqref{eqn: V ineq} can be derived by replacing both sides terms with preceding results as follows.
\begin{align}\label{eqn: V_h'>V_h}
    \nonumber
    \frac{2h'\left(1-\left(\frac{k+1}{k+h+1}\right )a \right )}{k+h+h'+1}>\frac{2ha}{k+h+1}  \,\,
    \Rightarrow  \,\, h'>\frac{a}{(1-a)}h
\end{align}

or equivalently,
\begin{equation*}
     h' > \left( \frac{\int_{\Omega_2}\rho_k dx}{\int_{\Omega_1}\rho_k dx} \right )h
\end{equation*}
\end{proof}

Theorem \ref{theorem: 1} provides the duration $h'$, the team of agents should stay in a certain hole given that the team spends $h$ amounts of time steps in $\Omega_1$. To ensure that the ergodic function $V_k$ will be piece-wise decreasing, this condition must be satisfied. 

In the proceeding section, the robot control law is provided to describe how the team of agents will explore the domain, which is different from the timing analysis presented in this section and hence, provides another contribution in this research.


\section{ROBOT CONTROL LAW}
In the previous section, the timing analysis is provided to guarantee the piece-wise convergence of $V_k$. As this result is not related to how the team of robots need to explore the holes, this section will present the robot control law that is the combination of two different control methods: the nearest point and the gradient method.

The nearest point method is to make each agent visit some point in a hole where the $\phi_k$ is negative and closest to the current agent locations. According to this control scheme, the position for the agent $j$ can be updated by
\begin{align}\label{eqn: robot dynamics}
    \mu_{k+1}^{j} = \mu_{k}^j + v_{\max}\cdot\dfrac{g^j_k-\mu_{k}^j}{\lVert g^j_k-\mu_{k}^j \rVert}
\end{align}
where $v_{max}$ is the maximum velocity attainable by the robot and $g^j_k$ denotes the nearest location with negative $\phi_k$ obtained by
\begin{align}\label{eqn: discrete g_k}
    g_k^j = \argmin_{ x\in\{x|\phi_k(x) < 0\} }||x - \mu_k^j||
\end{align}

The nearest point-based control law may lead to a successful exploration for the centralized multi-agent system; However, the method is nearsighted in that it always drives each agent towards the nearest point where $\phi_k<0$.

The second control law, the gradient method, is introduced as follows:
\begin{align}\label{eqn: gradient}
    \mu_{k+1}^{j} = \mu_{k}^j + v_{\max}\dfrac{G_k^j}{\lVert G_k^j \rVert}
\end{align}
where $G_k^j$ is the gradient of $\phi_k$ at the current location for the agent $j$. This gradient-based control law in general provides the information about magnitude and direction for the agent to fill the hole with a given mass generation. Obviously, this gradient-based method does not work efficiently once it gets stuck in either local minima or maxima.

To compensate for the weaknesses of the two different methods, we provide the combination of the two as follows:
\begin{align}\label{eqn: robot dynamics 2}
    \mu_{k+1}^{j} = \mu_{k}^j + v_{\max}\left(r\cdot \dfrac{g^j_k-\mu_{k}^j}{\lVert g^j_k-\mu_{k}^j \rVert} + (1-r)\cdot\dfrac{G_k^j}{\lVert G_k^j \rVert}\right)
\end{align}
where $r$ is defined by
\begin{align}\label{eqn: r}
    r = \frac{\rho_k(\mu_k^j)}{\rho_k(\mu_k^j) + \rho^*(\mu_k^j)}
\end{align}
Here, $\rho_k(\mu_k^j)$ and $\rho^*(\mu_k^j)$ indicate the values of time-averaged robot distribution and given spatial distribution at current agent location, i.e., $\rho_k(\mu_k^j)=\rho_k(x = \mu_k^j)$ and $\rho^*(\mu_k^j)=\rho^*(x=\mu_k^j)$. Notice that the values of $r$ always vary between $0$ and $1$ according to the given definition.

The parameters $r$ and $(1-r)$ in \eqref{eqn: robot dynamics 2} are defined as the weights assigned to the nearest point and gradient methods, which are included to help the agents decide which method to prioritize to update the position based on $\rho_k(\mu_k^j)$ and $\rho^*(\mu_k^j)$. For example, $r > (1-r)$ implies that at the current agent location $\mu_k^j$, there is excess mass ($\rho_k(\mu_k^j) > \rho^*(\mu_k^j)$, and thus, $\phi_k(\mu_k^j)>0$). So the agent is driven towards nearest point with negative $\phi_k$. On the other hand, $r < (1-r)$ indicates mass demand at $\mu_k^j$ , so the agent needs to move towards the nearest local minima of $\phi_k$ by prioritizing the gradient method. 
Thus, the agents can realize efficient exploration of the given domain by switching priorities between \eqref{eqn: robot dynamics} and \eqref{eqn: gradient} depending on the values of $r$ and $(1-r)$. 
As a consequence,  \eqref{eqn: robot dynamics 2} has higher convergence speed compared to \eqref{eqn: robot dynamics} and \eqref{eqn: gradient}.

\section{ALGORITHM}

\begin{figure}
\begin{center}
\includegraphics[scale=0.38]{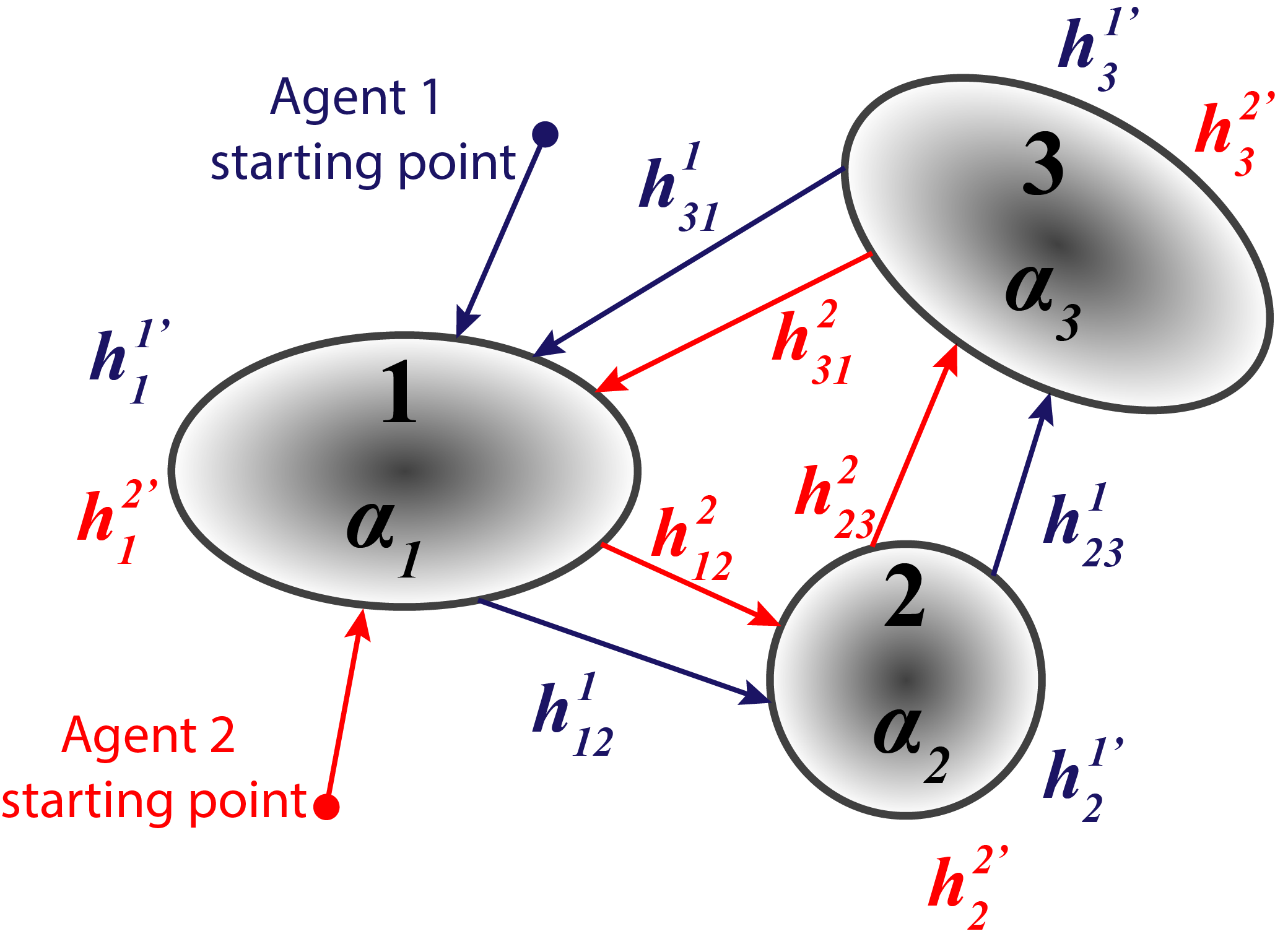}
\caption{Ergodic exploration trajectories of multi-agent system with 2 agents for 3-hole spatial distribution}\label{fig: Algorithm}
\end{center}
\end{figure}

This section provides the formal algorithm to attain the ergodicity for the multi-agent system. Fig. \ref{fig: Algorithm} illustrates how the team of agents travels in the domain $\Omega$. The red and blue points in the figure are given as the starting points for the two-agent system. Initially, the agents search for the first target hole to visit using the following steps. The value $d_{ji}$ is defined by the distance from the mean location of $i^{th}$ Gaussian component in the MoG to the $j^{th}$ agent. Then, the target hole $i^*$ for all agents to visit is determined from the following equation:
\begin{align}\label{eqn: hole detection}
    i^* = \argmin_{i} \left(\frac{\sum_{j = 1}^{N}d_{ji}}{\alpha_i}\right)
\end{align}
The above equation shows that not only distances but also the weight of each hole are taken into account to determine the first target hole. As a result, even if a hole with a greater weight is not the closest one, it may be selected to be visited initially due to its weight. 


For the given example in Fig. \ref{fig: Algorithm}, hole $1$ is found as the first target hole according to \eqref{eqn: hole detection}, as a result, the agents approach and explore the hole. The agents generate masses in the shape of skinny Gaussian distribution in every time step, as described in Assumption 2. The agents update their position using \eqref{eqn: robot dynamics 2} as discussed in the previous section.

The necessary exploration time in the current hole is proposed in \eqref{eqn: theorem_1} to guarantee the convergence of $V_k$, however, the theorem only provides the lower bound of the hole exploration time. This indicates that the convergence speed for $V_k$ may be too slow given the team of agents decides to leave the current hole just after \eqref{eqn: theorem_1} is satisfied. Consequently, another condition is required for the departure time that should be greater than the time in \eqref{eqn: theorem_1}.
The following condition is included to determine a proper departure time for the agents from the current hole:
\begin{align}\label{eqn: beta}
    h'':=\int_{\Omega_2\cap X(\text{target hole})} \vert \phi_k\vert dx > c_N
\end{align}
where $c_N = \beta \cdot e^{-\gamma\cdot N}$ with 
$\beta$ and $\gamma$ being some positive coefficients and $N$ as a cycle number. This cycle number $N$ increases when the team of agents visit all the holes and arrive at the initial hole again.

The condition in \eqref{eqn: beta} is incorporated to make the agents explore the hole until the accumulated error in the current hole $\int_{\Omega_2\cap X(\text{hole 1})} \vert \phi_k\vert dx$ obtains a value greater that $c_N$, meaning that the hole is to be filled by a mass of certain amount as mentioned in \eqref{eqn: beta}. 

The coefficient $c_N$ is defined as the above form due to the following observations.
First of all, the $\frac{1}{k+1}$ term in \eqref{eqn: rho_dis} implies that at lower time steps, the mass generated by the agents $f_k^j$ has greater contribution to the time-averaged distribution $\rho_k$ and error $\phi_k$. As a result, the longer exploration time may lead to the increment of $V_k$. Additionally, the agents cannot move outside the hole as soon as they decide to exit, which results in spilling some extra masses in the hole. Thus, the error $\phi_k$ may achieve a value that is positive, given $c_N$ is zero, which is undesired. Hence, $c_N$ is defined in a way that at lower time steps, the agents decide to leave the hole with some existing error in the hole and at later explorations, the agents exit the hole with less error in it. 

Once the exploration time for the current hole has a value higher than $\max(\Bar{h}', \Bar{h}'')$,  the team of agents heads towards the next hole, which is predetermined by the given configuration of the MoG. Here, $\Bar{h}'$ and $\Bar{h}''$ denote the time when the team of agents first satisfies \eqref{eqn: theorem_1} and \eqref{eqn: beta}. Each agent takes the shortest possible path to travel from one hole to another, although the path varies from agent to agent as the agents are located at different positions at the time of departure. In Fig. \ref{fig: Algorithm}, the agents approach hole $2$ instead of hole $3$ because hole $2$ is closer to hole $1$. 

The agents explore hole $2$ in a similar way that they explored hole $1$. They fill up the hole with mass and update their position using \eqref{eqn: robot dynamics 2}. As soon as the exploration time is greater than $\max(\Bar{h}',\Bar{h}'')$, they exit hole $2$ and travel to hole $3$. After the exploration of hole 3, the team will again approach hole $1$ and thus, they explore the domain in a cyclic manner. While traversing $\Omega_1$, the agents may not follow the same path to go from one hole to another because of the variation of $h^j_{12}$, $h^j_{23}$ and $h^j_{31}$ from different exploration cycle. Here, $h^j_{il}$ is defined as the time spent in $\Omega_1$ by agent $j$ to reach the $l^{\text{th}}$ hole from $i^{\text{th}}$ hole. 

A pseudo code is provided to illustrate the formal procedure of the proposed multi-agent centralized ergodic algorithm.

\begin{algorithm}[!h]
\caption{Multi-agent Centralized Ergodic Exploration Algorithm}\label{algorithm:1}
\begin{algorithmic}[1]
\State initialize $\rho^*$, $v_{max}$, $f_0^j$, $\mu_0^j$ $k\gets 0$
\State Find the target hole:
\If{$k=0$} it is calculated from \eqref{eqn: hole detection}
\Else{ it is updated by the given configuration of an MoG}
\EndIf
\While{the team of agents staying time in the current hole $<\max(\Bar{h}',\Bar{h}'')$}
\State \textbf{Each agent implements the following}
\For{$j \gets 1$ to $N$}
\State Calculate $r$ from \eqref{eqn: r} and $g_k^j$ from \eqref{eqn: discrete g_k}
\State Update the next agent position $\mu_{k+1}^j$ by \eqref{eqn: robot dynamics 2}
\State Fill up the target hole by generating mass $f_k^j$
\EndFor
\State Update $\rho_k$ from \eqref{eqn: rho_dis}
\State Calculate $\phi_k$ and $V_k$ from  \eqref{eqn: phi_{k}} and \eqref{eqn: V_k}, respectively
\State $k\gets k+1$
\EndWhile
\State Repeat from step 2 for the next hole

\end{algorithmic}
\end{algorithm}


\section{SIMULATIONS}
Numerical simulation results are provided in this section to verify the the correctness of the proposed methods as well as the effectiveness of the ergodic exploration algorithm. 

The spatial reference distribution is given as an MoG such that
\begin{align*}
    \rho^* &= \sum_{i=1}^{4}\alpha_i\mathcal{N}(\mu_i,\Sigma_i),\\
    \text{where }
    \mu_1 &=[120, 320]^{T}, \mu_2=[80, 100]^T, \\
    \mu_3 &= [300,120]^T, \mu_4 = [320, 320]^T \\
    \Sigma_1 &= \begin{bmatrix}
        25 & 0\\
        0 & 15
    \end{bmatrix},
    \Sigma_2 = \begin{bmatrix}
        15 & 0\\
        0 & 20
    \end{bmatrix},\\
    \Sigma_3 &= \begin{bmatrix}
        20 & 0\\
        0 & 20
    \end{bmatrix},
    \Sigma_4 = \begin{bmatrix}
        10 & 0\\
        0 & 15
    \end{bmatrix}
\end{align*}
with $\alpha = [0.3, 0.2, 0.4, 0.1]$.


The three-agent system is considered here with their initial positions given as 
$\mu_0^1 = [180,175]^T$, $\mu_0^2 = [200,300]^T$, $\mu_0^3 = [300,200]^T$. The covariance matrix for the mass generation in the form of the skinny Gaussian is $\Sigma^R=\begin{bmatrix}3 & 0\\0 & 3\end{bmatrix}$. The maximum velocity of all agents is limited to $10$.

\begin{figure*}[t]
    \centering
    \subfloat[Initial simulation setup.]{\includegraphics[scale=0.23]{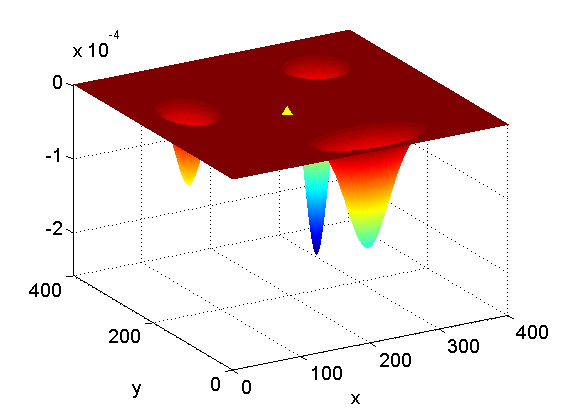}}
    \subfloat[k=20]{\includegraphics[scale=0.33]{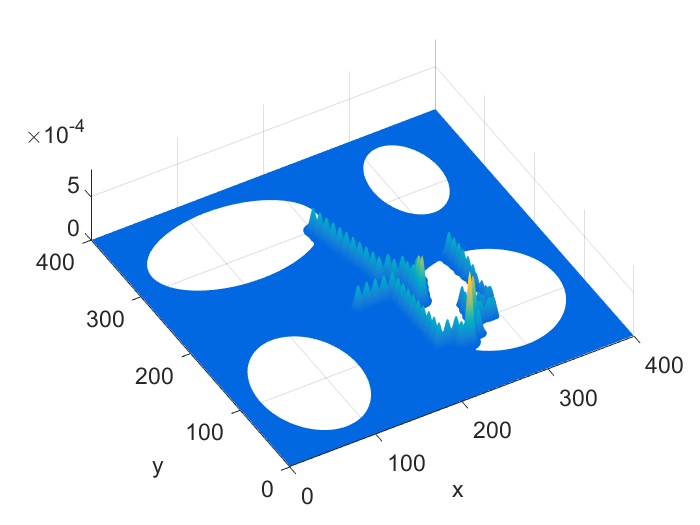}}
    \subfloat[k=2000]{\includegraphics[scale=0.4]{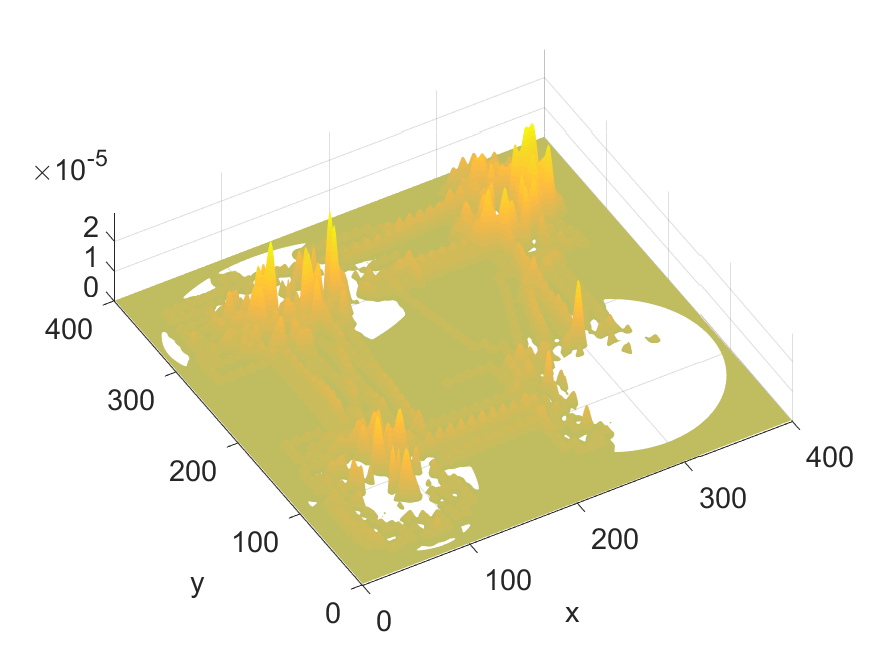}}\\
    \subfloat[k=5000]{\includegraphics[scale=0.4]{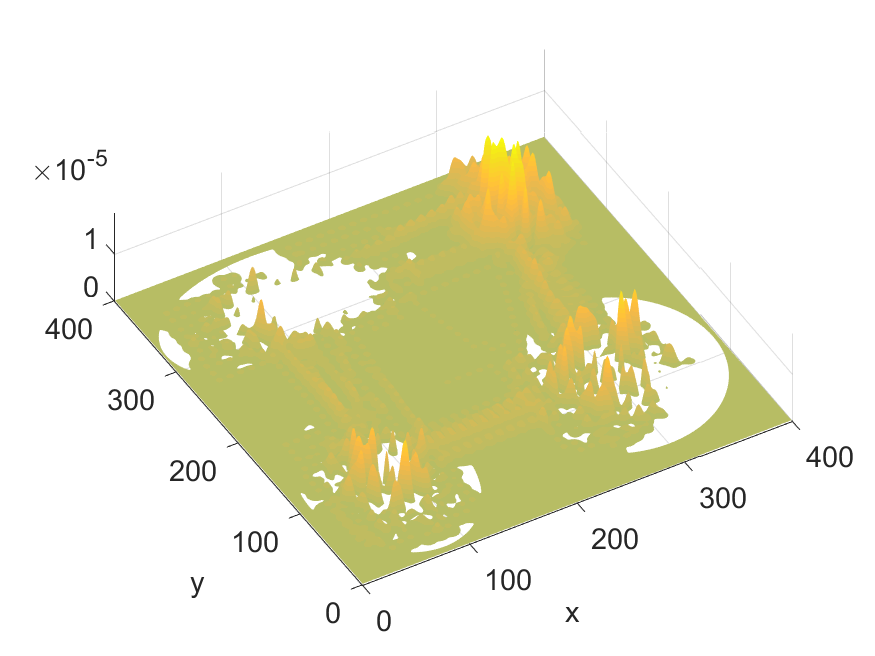}}
    \subfloat[k=7500]{\includegraphics[scale=0.4]{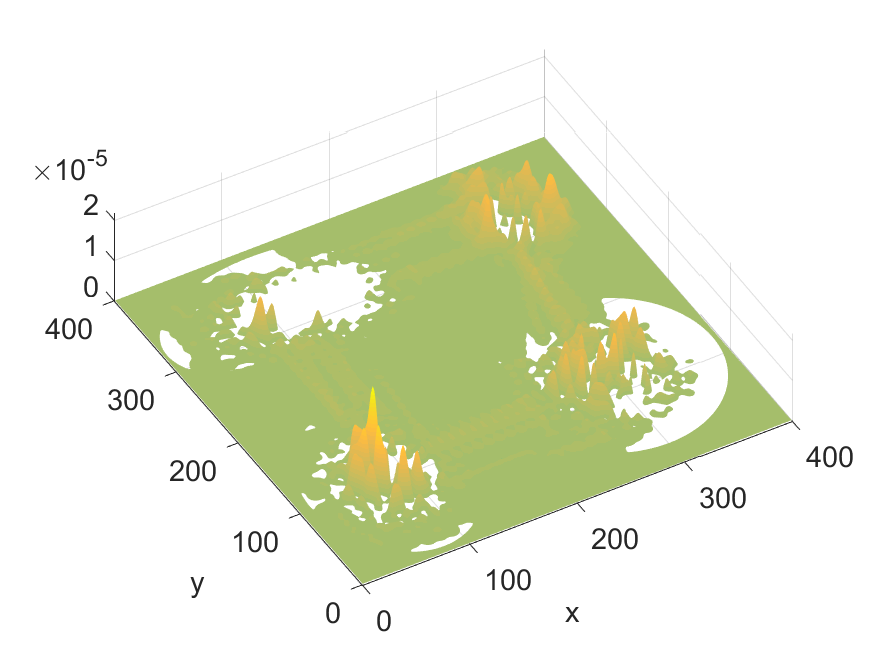}}
    \subfloat[k=10000]{\includegraphics[scale=0.4]{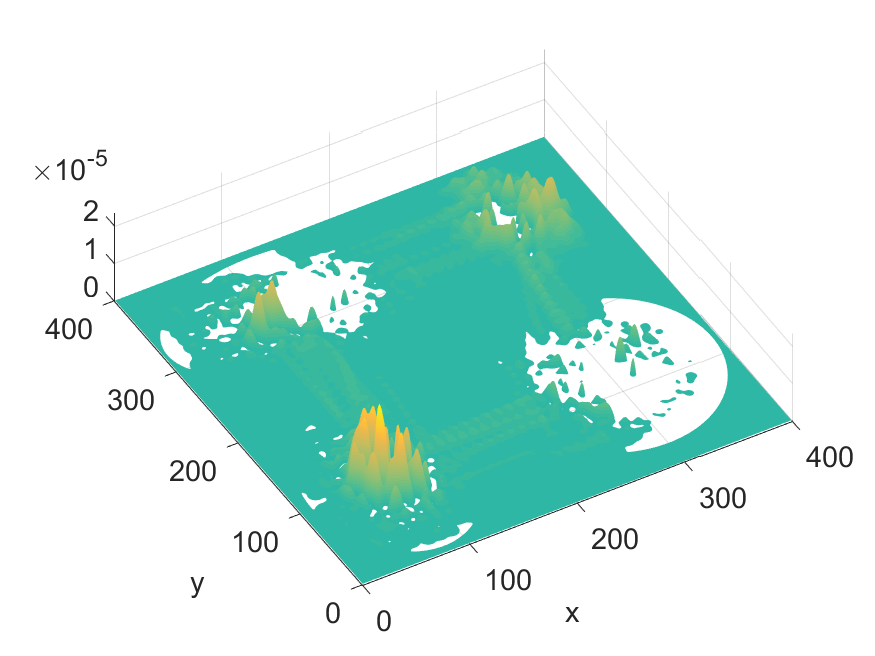}}
    \caption{The snapshot of the simulation for the ergodic exploration: (a) the spatial distribution with a negative sign with the robot initial position (red triangles) and the associated hole numbers; (b-c) snapshots of the ergodic exploration at different time steps}
    \label{fig:snapshot}
\end{figure*}

\begin{figure*}[!tbph]
    \centering
    \subfloat[]{\includegraphics[scale=0.53]{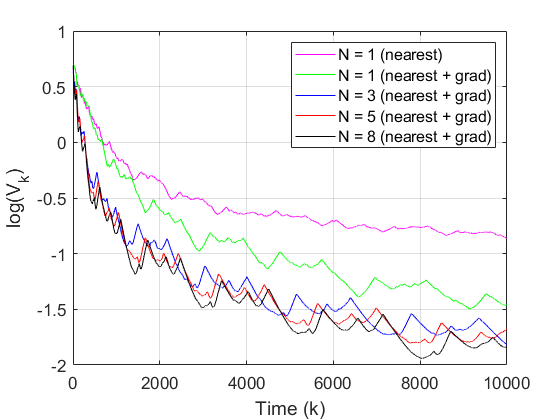}}\qquad
    \subfloat[]{\includegraphics[scale=0.53]{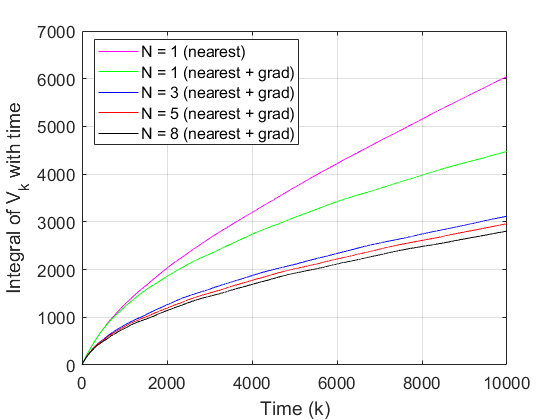}}
    \caption{The ergodic function $V_k$: (a) $V_k$ in log scale for a different number of agents as well as control method (nearest / nearest + grad); (b) integral of $V_k$ with respect to time.}
    \label{fig:Vk_log_integral}
\end{figure*}

In Fig. \ref{fig:snapshot} (a), the initial robot positions (red triangle symbols) and the negated spatial distribution along with their associated hole numbers are illustrated. Starting from the initial position, the robots head towards the initial target hole that is determined by equation (\ref{eqn: hole detection}), which in this case, is the lower-right hole (hole $3$). To satisfy the proposed convergence \eqref{eqn: theorem_1} and departure \eqref{eqn: beta} conditions, the robots spend a designated time, $h_3^{1'}$, $h_3^{2'}$, and $h_3^{3'}$, to explore the hole. After that, the robots move to the succeeding hole determined by the given configuration of the MoG, and then spend another designated time in that hole as determined by \eqref{eqn: theorem_1} and \eqref{eqn: beta}.



In Fig. \ref{fig:Vk_log_integral} (a), $V_k$ vs time plot is presented in a log scale for the convergence results. To compare the rate of convergence for $V_k$ with respect to the number of agents as well as the robot control law, numerous simulations were conducted.
For the the single agent case ($N=1$), the control law comparison is provided between the nearest point and the proposed combination method (the nearest + gradient).
It is observed that \eqref{eqn: robot dynamics 2} (labeled "nearest + grad" on figure) performs better than the nearest only control law. 
Fig. \ref{fig:Vk_log_integral} (a) does not clearly illustrate whether an increase in the agent number leads to a faster convergence of $V_k$ as the $N=$ 3, 5 and 8 cases alternate with each other. By taking the integral of $V_k$, however, it is clearly shown in Fig. \ref{fig:Vk_log_integral} (b) that an increase in the agent number yields a faster convergence of $V_k$. 
From this plot, $V_k$ keeps decreasing and hence, it can be concluded that the multi-agent system will achieve the ergodicity as $k\rightarrow\infty$.






\section{CONCLUSION}

In this paper, a centralized multi-agent exploration strategy is developed to realize ergodicity. Each agent  is assumed to generate a mass with a skinny Gaussian distribution and the reference spatial distribution is given as an MoG. The piece-wise convergence condition to attain ergodicity is derived based on the timing analysis. Multi-robot control strategy is also developed for the faster ergodic exploration and the formal algorithm for achieving ergodicity is provided. Simulations were performed to support the validity as well as effectiveness of the proposed multi-agent ergodic exploration method.










\bibliographystyle{ieeetr}
\bibliography{reference}

\end{document}